\documentclass[12pt,a4paper,reqno]{article}
\usepackage{graphics}
\usepackage{epsfig}

\textheight 9.1in \textwidth 6.2in \topmargin 0in \headheight 0in
\oddsidemargin .1in \evensidemargin .1in
\usepackage{amsmath,amsthm,amssymb}
\newtheorem{theorem}{Theorem}[section]
\newtheorem{corollary}[theorem]{Corollary}
\newtheorem{lemma}[theorem]{Lemma}

\newtheorem{example}[theorem]{Example}
\theoremstyle{definition}
\newtheorem{definition}[theorem]{Definition}
\theoremstyle{remark}

\numberwithin{equation}{section}

\begin{document}
\title{ Exponent of  Cyclic Codes over $\mathbb{F}_{q}$}

\author{N. Annamalai\\
Assistant Professor\\
Indian Institute of Information Technology Kottayam\\
Pala, Kerala, India\\
{Email: annamalai@iiitkottayam.ac.in}
\bigskip\\
C Durairajan\\
Associate  Professor\\
Department of Mathematics\\ 
School of Mathematical Sciences\\
Bharathidasan University\\
Tiruchirappalli-620024, Tamil Nadu, India\\
{Email: cdurai66@rediffmail.com}
\hfill \\
\hfill \\
\hfill \\
\hfill \\
{\bf Proposed running head:} Exponent of a Cyclic Codes over $\mathbb{F}_{q}$}
\date{}
\maketitle

\newpage

\vspace*{0.5cm}
\begin{abstract} 
	In this article, we introduce and study the concept of the exponent of a cyclic code over a finite field $\mathbb{F}_q.$ We  give a relation between the exponent of a cyclic code and its dual code. Finally, we introduce and determine the exponent distribution of the cyclic code.
\end{abstract}
\vspace*{0.5cm}

{\it Keywords:} Order of the Polynomial, Cyclic Codes, Exponent of Cyclic Code, Dual Code.

{\it 2000 Mathematical Subject Classification:} Primary: 94B25, Secondary: 11H31
\vspace{0.5cm}
\vspace{1.5cm}

\section{Introduction}
\quad Cyclic codes are a class of important linear codes and have generated great interest in
coding theory. In general, they have natural encoding and decoding algorithm. Since
cyclic codes can be described as ideals in polynomials residue rings, they have a rich
algebraic structure. There is a lot of literature on cyclic codes over fields and more
recently over rings. Cyclic codes have a unique property that is the codewords of cyclic
code can be divided into a numbers of mutually disjoint equivalence classes according
to an equivalence relation.

Let $\mathbb{F}_q$ be a finite field with $q$ elements. A code $C$ of length $n$ over $\mathbb{F}_q$ is a nonempty
subset of $\mathbb{F}_q^n.$ If $C$ is a subspace of $\mathbb{F}_q^n,$ then $C$ is called a $q$-ary linear code of length
$n.$ If the dimension of $C$ is $k,$ then the linear code $C$ is denoted by $[n, k]_q$ code. A
linear code $C$ of length $n$ over $\mathbb{F}_q$ is a cyclic code if for every $c =(c_0, c_1, \cdots, c_{n-1}) \in C,$
 $(c_{n-1}, c_0, \cdots, c_{n-2}) \in C.$
 
Let $R_n = \frac{\mathbb{F}_q[x]}{\langle x^n-1\rangle}$ be a polynomial residue ring. Then the following theorem gives a
relation between a cyclic code and an ideal of $R_n.$

\begin{theorem}\cite{vera}\label{thm0}
 The linear code $C$ is cyclic code if and only if $C$ is an ideal of $R_n=\frac{\mathbb{F}_q[x]}{\langle x^n-1\rangle}.$
\end{theorem}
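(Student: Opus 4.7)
The plan is to make the identification
\[
(c_0, c_1, \ldots, c_{n-1}) \;\longleftrightarrow\; c_0 + c_1 x + \cdots + c_{n-1} x^{n-1}
\]
between $\mathbb{F}_q^n$ and the $\mathbb{F}_q$-vector space underlying $R_n$, and then translate both sides of the biconditional into algebraic conditions on the image of $C$ in $R_n$. The key observation I would isolate first is that under this identification the cyclic shift corresponds exactly to multiplication by $x$ in $R_n$: indeed, $x \cdot (c_0 + c_1 x + \cdots + c_{n-1} x^{n-1}) = c_{n-1} + c_0 x + \cdots + c_{n-2} x^{n-1}$ after reducing $x^n$ to $1$ modulo $x^n - 1$. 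This single computation is the engine of the whole proof.

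For the forward direction, I would assume $C$ is a linear cyclic code. Linearity gives closure of $C$ under addition and under multiplication by scalars from $\mathbb{F}_q$, while the cyclic property, through the observation above, gives closure under multiplication by $x$. Iterating, $C$ is closed under multiplication by $x^i$ for every $i \geq 0$, and then by $\mathbb{F}_q$-linearity it is closed under multiplication by an arbitrary polynomial $f(x) \in \mathbb{F}_q[x]$. Since every element of $R_n$ is represented by such a polynomial, $C$ absorbs multiplication by every element of $R_n$, i.e.\ $C$ is an ideal.

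For the converse, I would assume $C$ is an ideal of $R_n$. The ideal axioms immediately give closure under addition and under multiplication by every element of $R_n$; restricting scalar multiplication to the constant polynomials $\mathbb{F}_q \subset R_n$ shows $C$ is an $\mathbb{F}_q$-subspace of $R_n$, hence a linear code. Specialising the absorbing property to multiplication by $x$, and reading the result through the correspondence above, yields the cyclic shift property of codewords.

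The argument is essentially a dictionary translation, so there is no serious obstacle; the only point one must be careful with is the reduction modulo $x^n-1$ that produces the wrap-around in the shift, and the fact that $\mathbb{F}_q$ sits inside $R_n$ as constants so that the notions of \emph{linear subspace} and \emph{$\mathbb{F}_q$-submodule of $R_n$} coincide. Once these two bookkeeping points are verified, both implications are direct.
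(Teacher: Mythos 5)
Your proof is correct and is the standard argument (the paper itself gives no proof, merely citing Pless): the identification of the cyclic shift with multiplication by $x$ modulo $x^n-1$ is exactly the engine of the textbook proof, and both directions are handled properly, including the point that closure under multiplication by constants recovers $\mathbb{F}_q$-linearity.
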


 \begin{theorem}\cite{tod}\label{thm}
  Let $I$ be an ideal in $R_n=\frac{\mathbb{F}_q[x]}{\langle x^n-1\rangle}.$ Then
  \begin{enumerate}
   \item[1.] there is a unique monic polynomial $g(x)\in I$  of minimal degree,
   \item[2.] $I$ is principal with generator $g(x),$
   \item[3.] $g(x)$ divides $(x^n-1)$ in $\mathbb{F}_q[x].$
  \end{enumerate}
 \end{theorem}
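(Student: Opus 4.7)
The plan is to imitate the classical argument that shows $\mathbb{F}_q[x]$ is a principal ideal domain, transported to the quotient ring $R_n$. Throughout I would identify each element of $R_n$ with its unique representative in $\mathbb{F}_q[x]$ of degree less than $n$, so that the usual division algorithm in $\mathbb{F}_q[x]$ is available whenever needed, and I would repeatedly exploit the fact that an ideal is closed under multiplication by arbitrary elements of $R_n$ and under subtraction.

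For part (1), I would consider the set $S = \{\deg f : 0 \neq f \in I\}$ of degrees of nonzero representatives of elements of $I$. Since $S$ is a nonempty set of nonnegative integers, it has a least element $d$; pick any nonzero $f \in I$ of degree $d$ and normalize by its leading coefficient to obtain a monic polynomial $g(x) \in I$ of degree $d$. Uniqueness follows immediately: if $g_1, g_2$ were two distinct monic polynomials of degree $d$ in $I$, then $g_1 - g_2 \in I$ would be a nonzero polynomial of degree strictly less than $d$, contradicting minimality.

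For part (2), let $f(x) \in I$ be arbitrary and apply the division algorithm in $\mathbb{F}_q[x]$ to write $f(x) = q(x) g(x) + r(x)$ with $r(x) = 0$ or $\deg r(x) < \deg g(x)$. Because $g \in I$ and $I$ is an ideal of $R_n$, the product $q(x) g(x)$ lies in $I$, and hence $r(x) = f(x) - q(x) g(x) \in I$. Minimality of $\deg g$ forces $r(x) = 0$, so $f(x) = q(x) g(x)$ and $I = \langle g(x)\rangle$.

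For part (3), apply the division algorithm once more, this time to $x^n - 1$ by $g(x)$ in $\mathbb{F}_q[x]$, obtaining $x^n - 1 = h(x) g(x) + s(x)$ with $s(x) = 0$ or $\deg s(x) < \deg g(x)$. Reducing modulo $x^n - 1$ yields $s(x) \equiv -h(x) g(x) \pmod{x^n - 1}$, and the right-hand side lies in $I$, so $s(x) \in I$. Again minimality of $\deg g$ forces $s(x) = 0$, proving $g(x) \mid (x^n - 1)$ in $\mathbb{F}_q[x]$. The only delicate point in the whole argument is bookkeeping the two viewpoints, polynomials in $\mathbb{F}_q[x]$ versus cosets in $R_n$, so that the division algorithm is applied in $\mathbb{F}_q[x]$ while membership in $I$ is interpreted in $R_n$; once that is handled cleanly, each part reduces to the minimality of $\deg g$.
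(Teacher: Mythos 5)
The paper states this result only as a citation to Moon's textbook and supplies no proof of its own; your argument is the standard one found in such references (take the monic polynomial of minimal degree and run the division algorithm three times), and it is correct, including the bookkeeping point that $f$, $q(x)g(x)$, and the remainders $r(x)$, $s(x)$ all have degree less than $n$, so membership in $I$ can be read directly in $R_n$. The only caveat is the zero ideal $I=\{0\}$, for which your set $S$ is empty and no monic element exists; the statement implicitly excludes this case (or adopts the usual convention that the zero code has generator polynomial $x^n-1$), so nothing in your argument needs repair.
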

 
 Note that $C$ is an ideal of $R_n$ and by Theorem \ref{thm}, $C$ is a principal ideal generated
by a unique monic polynomial $g(x) \in R_n.$ The polynomial $g(x)$ is called the generator
polynomial of the cyclic code $C.$ Since the generator polynomial $g(x)$ of $C$ is a divisor $x^n-1,$  $\frac{x^n-1}{g(x)}\in R_n.$



 Let $f(x)=a_n x^n+a_{n-1}x^{n-1}+\cdots+a_1 x+a_0\in \mathbb{F}_q[x]$ with $a_n\neq 0.$ Then the reciprocal
 polynomial $f^{*}$ of $f$ is defined by $$f^{*}(x)=x^n f\left(\frac{1}{x}\right)=a_0 x^n+a_1 x^{n-1}+\cdots+a_{n-1}x+a_n.$$
 
 Let $C$ be an $[n, k]_q$ code, then the dual code $C^{\perp}$ of the linear code $C \subseteq \mathbb{F}_q^n$ is defined by 
 $$C^{\perp}= \{ x \in \mathbb{F}_q^n \mid \langle x, c \rangle = 0\, \text{ for all } \, c \in C \}$$ where
    $\langle x, c \rangle = \sum_{i = 1}^n x_i c_i$ 
is a scalar product. Clearly, $C^{\perp}$ is an $[n, n - k]_q$ code.

\begin{theorem}\cite{lidl}\label{thm1}
 Let $g(x)$ be the generator polynomial of the cyclic code $C$ and let $h(x) =\frac{x^n-1}{g(x)}\in R_n.$ Then the dual code $C^{\perp}$ is cyclic with generator polynomial
$h^*(x).$
\end{theorem}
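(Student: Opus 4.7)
The plan is to prove the theorem in three stages: first establish that $C^\perp$ is cyclic, then exhibit $h^*(x)$ together with enough of its cyclic shifts as elements of $C^\perp$, and finally do a dimension count to conclude that these shifts already span $C^\perp$, so the cyclic code they generate equals $C^\perp$.

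For cyclicity I would use the adjoint property of the cyclic shift $\sigma$ under the standard inner product: for any $c\in C$ and $d\in C^\perp$ one has $\langle \sigma d,c\rangle=\langle d,\sigma^{-1}c\rangle$, and $\sigma^{-1}c\in C$ because $C$ is cyclic, so the right-hand side vanishes. Hence $\sigma d\in C^\perp$, making $C^\perp$ cyclic. The relevant dimensions are $\dim C=\deg h=k$ and $\dim C^\perp=n-k=\deg g$, so it suffices to produce $n-k$ linearly independent vectors of $C^\perp$ related to $h^*$.

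The core computation rests on the identity $g(x)h(x)=x^n-1$ in $\mathbb{F}_q[x]$. Any $c(x)\in C$ has the form $a(x)g(x)$ with $\deg a\le k-1$, so $c(x)h(x)=a(x)(x^n-1)$ as polynomials in $\mathbb{F}_q[x]$, and this polynomial has zero coefficient in every degree $m$ with $k\le m\le n-1$. Unpacking the convolution, the coefficient of $x^{k+j}$ for $0\le j\le n-k-1$ is
\[
\sum_{i=0}^{k} c_{k+j-i}\, h_i,
\]
which is exactly the inner product of $c=(c_0,\ldots,c_{n-1})$ against $\sigma^{j}(h^*)$, once we identify the polynomial $h^*(x)=h_k+h_{k-1}x+\cdots+h_0 x^k$ with the vector $(h_k,h_{k-1},\ldots,h_0,0,\ldots,0)\in \mathbb{F}_q^n$. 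Therefore each of $h^*,\sigma h^*,\ldots,\sigma^{n-k-1}h^*$ lies in $C^\perp$.

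Finally, because $g$ is monic and $gh=x^n-1$, the polynomial $h$ is monic as well, so $h_k=1$, and the $(n-k)\times n$ matrix whose rows are these $n-k$ shifts has a triangular leftmost $(n-k)\times(n-k)$ block with $1$s on the diagonal, hence full rank $n-k$. Combined with $\dim C^\perp=n-k$, the shifts span $C^\perp$, and so $C^\perp$ is the cyclic code generated by $h^*(x)$. The main obstacle I anticipate is the index bookkeeping in the convolution step: one has to verify that the ranges $0\le j\le n-k-1$ and $0\le i\le k$ line up so that the coefficients of $x^{k+j}$ in $c(x)h(x)$ are genuine inner products with shifts of $h^*$ without any wrap-around from reducing modulo $x^n-1$ interfering with the identification.
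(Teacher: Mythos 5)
The paper does not prove this theorem at all: it is imported verbatim from Lidl--Niederreiter with a citation, so there is no in-paper argument to compare against. Your proof is correct and complete, and it is the standard one: shift-invariance of $C^\perp$ via the adjoint of $\sigma$, the convolution identity $c(x)h(x)=a(x)(x^n-1)$ forcing the coefficients of $x^{k},\dots,x^{n-1}$ to vanish, identification of those coefficients with $\langle c,\sigma^j h^*\rangle$ (your index bookkeeping checks out: for $0\le j\le n-k-1$ the shift $x^jh^*(x)$ has degree at most $n-1$, so no wrap-around occurs), and the echelon/dimension count using $h_k=1$. The only caveat is one already present in the statement itself rather than in your argument: $h^*(x)$ has leading coefficient $h(0)$, which need not equal $1$, so strictly speaking the \emph{monic} generator polynomial in the sense of Theorem~\ref{thm} is $h(0)^{-1}h^*(x)$; what you have proved, correctly, is that $C^\perp=\langle h^*(x)\rangle$.
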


%
%
%
%
%
%

 Let $f(x)\in \mathbb{F}_q[x]$ be a nonzero polynomial. If $f(0)\neq 0$, then the least positive integer $e$ 
 for which $f(x)$ divides $x^e-1$ is called the {\it period of $f(x)$} or the {\it order of $f(x)$} and denoted by $ord(f)=ord(f(x)).$
 If $f(0)=0,$ then $f(x)=x^l g(x),$ where $l\in \mathbb{N}$ and $g(x)\in \mathbb{F}_q[x]$  with $g(0)\neq 0$ are uniquely
 determined and the $ord(f)$ is then defined by $ord(g).$

We state the followings for our further discussion.

\begin{lemma}\cite{lidl}\label{lem1}
Let $r$ be a positive integer. Then the polynomial $f(x)\in \mathbb{F}_q[x]$ with $f(0)\neq 0$ divides $x^r-1$ if and only if $ord(f)$ divides $r.$
\end{lemma}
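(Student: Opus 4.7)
The plan is to prove both directions of the biconditional by using the division algorithm on the exponent $r$ and exploiting the minimality of $e := ord(f)$. I will assume throughout that such an $e$ exists; this is standard and follows from the fact that $\mathbb{F}_q[x]/\langle f(x)\rangle$ is finite and that $x$ is a unit in this quotient (because $f(0)\neq 0$ forces $\gcd(x,f(x))=1$), so the powers of $x$ in the quotient must cycle back to $1$.

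For the easier direction, I would assume $e \mid r$ and write $r = ke$ for some positive integer $k$. The algebraic identity
\[
x^r - 1 = (x^e)^k - 1 = (x^e - 1)\bigl(x^{e(k-1)} + x^{e(k-2)} + \cdots + x^e + 1\bigr)
\]
shows $x^e - 1$ divides $x^r - 1$ in $\mathbb{F}_q[x]$. Since $f(x) \mid x^e - 1$ by definition of $e$, transitivity of divisibility gives $f(x) \mid x^r - 1$.

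For the converse, I would assume $f(x) \mid x^r - 1$ and apply the division algorithm to write $r = qe + s$ with $0 \le s < e$. The identity
\[
x^r - 1 = x^{qe+s} - x^s + x^s - 1 = x^s\bigl(x^{qe} - 1\bigr) + \bigl(x^s - 1\bigr)
\]
is the key. By the forward direction already established, $f(x) \mid x^{qe}-1$, and by hypothesis $f(x) \mid x^r - 1$, so $f(x)$ must divide the remainder $x^s - 1$. If $s > 0$, then $x^s - 1$ is a nonzero polynomial with $0 < s < e$ such that $f(x) \mid x^s - 1$, contradicting the minimality of $e = ord(f)$. Hence $s = 0$ and $e \mid r$.

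I do not foresee any serious obstacle; the proof is essentially an application of the division algorithm combined with the minimality in the definition of $ord(f)$. The only subtle point worth stating clearly is why divisibility of $x^s - 1$ by $f$ when $0 < s < e$ really does contradict minimality: one must observe that $x^s - 1$ is a genuinely nonzero polynomial in that range so the statement "$f \mid x^s - 1$" is meaningful, and that $s$ is a positive integer smaller than the claimed least such value.
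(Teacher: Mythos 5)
Your proof is correct. The paper states this lemma without proof, citing Lidl--Niederreiter; your argument (the forward direction via the factorization $x^{ke}-1=(x^e-1)(x^{e(k-1)}+\cdots+1)$, the converse via the division algorithm and the minimality of $ord(f)$) is essentially the standard proof given in that reference, so there is nothing to add.
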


\begin{corollary}\cite{lidl}\label{cor1}
 If $e_1$ and $e_2$ are positive integers, then the greatest common divisor of $x^{e_1}-1$ and $x^{e_2}-1$
 in $\mathbb{F}_q[x]$ is $x^d-1$,  where $d$ is the greatest common divisor of $e_1$ and $e_2.$
\end{corollary}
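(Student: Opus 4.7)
The plan is to prove the two-sided divisibility $x^d-1 \mid \gcd(x^{e_1}-1, x^{e_2}-1)$ and $\gcd(x^{e_1}-1, x^{e_2}-1) \mid x^d-1$, and then invoke monicness to conclude equality. The key tool is Lemma \ref{lem1}, which converts polynomial divisibility into divisibility of integers via the order function.

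For the easy direction, I would use the standard elementary identity: if $a \mid b$ in $\mathbb{Z}_{>0}$ with $b = ak$, then $x^b - 1 = (x^a)^k - 1 = (x^a - 1)\bigl((x^a)^{k-1}+(x^a)^{k-2}+\cdots+1\bigr)$, so $x^a - 1 \mid x^b - 1$. Since $d=\gcd(e_1,e_2)$ divides both $e_1$ and $e_2$, this gives $x^d-1 \mid x^{e_1}-1$ and $x^d-1 \mid x^{e_2}-1$, and consequently $x^d - 1$ divides the gcd.

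For the harder direction, set $g(x) = \gcd(x^{e_1}-1, x^{e_2}-1)$ in $\mathbb{F}_q[x]$. The first thing to observe is that $g(0) \neq 0$: from $g(x) \mid x^{e_1}-1$ we write $x^{e_1}-1 = g(x)q(x)$, and evaluating at $0$ gives $-1 = g(0)q(0)$, so $g(0)$ is a unit in $\mathbb{F}_q$. This small observation is what unlocks Lemma \ref{lem1}, and I expect it is the only genuinely non-routine step. Applying Lemma \ref{lem1} to the two divisibilities $g \mid x^{e_1}-1$ and $g \mid x^{e_2}-1$ yields $\mathrm{ord}(g) \mid e_1$ and $\mathrm{ord}(g) \mid e_2$, hence $\mathrm{ord}(g) \mid d$ by the usual universal property of the integer gcd. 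A second application of Lemma \ref{lem1} (in the other direction) gives $g \mid x^{\mathrm{ord}(g)} - 1$, and then the elementary identity from the previous paragraph yields $x^{\mathrm{ord}(g)} - 1 \mid x^d - 1$, so $g \mid x^d - 1$.

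Combining the two divisibilities shows $g$ and $x^d - 1$ are associates in $\mathbb{F}_q[x]$. Since the gcd is taken to be monic and $x^d-1$ is monic, the associates must be equal, completing the proof. The main obstacle is conceptual rather than computational: recognizing that one should route through $\mathrm{ord}(g)$ and that Lemma \ref{lem1} applies because the constant term of $g$ is automatically nonzero.
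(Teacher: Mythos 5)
Your proof is correct, but note that the paper offers no proof of this statement at all: Corollary~\ref{cor1} is quoted from Lidl--Niederreiter with a citation, so there is no in-paper argument to compare against. On its own merits your argument is sound. The easy direction ($x^d-1$ divides both $x^{e_i}-1$ via the geometric-series factorization, hence divides their gcd) is fine, and the key observation in the hard direction --- that $g=\gcd(x^{e_1}-1,x^{e_2}-1)$ satisfies $g(0)\neq 0$ because $g(0)q(0)=-1$, which legitimizes the use of Lemma~\ref{lem1} --- is exactly the right hinge. From there $\operatorname{ord}(g)\mid\gcd(e_1,e_2)=d$, hence $g\mid x^{\operatorname{ord}(g)}-1\mid x^d-1$, and monicness of both sides upgrades mutual divisibility to equality. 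It is worth knowing that there is an even more elementary route that bypasses the order function entirely: for $a>b$ one has $x^a-1=x^{a-b}(x^b-1)+(x^{a-b}-1)$, so $\gcd(x^a-1,x^b-1)=\gcd(x^{a-b}-1,x^b-1)$, and iterating mimics the Euclidean algorithm on the exponents until it terminates at $x^{\gcd(e_1,e_2)}-1$. That version needs no finiteness of the field and no well-definedness of $\operatorname{ord}$; your version has the advantage of staying entirely within the toolkit the paper has already assembled (Lemma~\ref{lem1}).
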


\begin{theorem}\cite{lidl}\label{thm2}
 Let $f$ be a nonzero polynomial in $\mathbb{F}_q[x]$ and $f^{*}$ its reciprocal polynomial. Then $ord(f)=ord(f^{*}).$
\end{theorem}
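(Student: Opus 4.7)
\section*{Proof plan}

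The plan is to show that $f(x)$ and $f^{*}(x)$ divide exactly the same polynomials of the form $x^{e}-1$, so that by Lemma \ref{lem1} their orders must coincide.

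First I would dispose of the case $f(0)=0$. Write $f(x)=x^{l}g(x)$ with $g(0)\neq 0$, so that by definition $ord(f)=ord(g)$. A direct computation from $f^{*}(x)=x^{\deg f}f(1/x)$ shows $f^{*}(x)=g^{*}(x)$ (the $x$-factors disappear under reversal of coefficients), and in particular $f^{*}(0)\neq 0$. Hence it suffices to prove $ord(g)=ord(g^{*})$, and we may therefore assume from the outset that $f(0)\neq 0$.

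The technical core is the multiplicativity of the reciprocal: if $p,q\in\mathbb{F}_{q}[x]$ both have nonzero constant term, then $\deg(pq)=\deg p+\deg q$ and
\[
(pq)^{*}(x)=x^{\deg p+\deg q}p(1/x)q(1/x)=p^{*}(x)q^{*}(x).
\]
I would verify this in one line and record the easy identities $(x^{e}-1)^{*}=1-x^{e}$ and $(f^{*})^{*}=f$ (the latter using $f(0)\neq 0$). Now set $e=ord(f)$. By Lemma \ref{lem1}, $f\mid x^{e}-1$, so $x^{e}-1=f(x)h(x)$ for some $h\in\mathbb{F}_{q}[x]$. Evaluating at $0$ gives $h(0)=-1/f(0)\neq 0$, so the multiplicativity applies: taking reciprocals yields
\[
1-x^{e}=(x^{e}-1)^{*}=f^{*}(x)h^{*}(x),
\]
and therefore $f^{*}\mid x^{e}-1$. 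Applying Lemma \ref{lem1} in the reverse direction gives $ord(f^{*})\mid e=ord(f)$.

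For the opposite divisibility, I would simply run the same argument with $f$ replaced by $f^{*}$; the identity $(f^{*})^{*}=f$ turns the conclusion into $ord(f)\mid ord(f^{*})$. Combining the two divisibilities gives $ord(f)=ord(f^{*})$. The only place where care is needed \textemdash\ and so the step I expect to be the main obstacle \textemdash\ is the multiplicativity of the reciprocal: it is false in general if a factor has zero constant term (degrees can drop), so one must check that each cofactor encountered (here $h$ and the analogous cofactor in the symmetric argument) has nonzero constant term. This is immediate from evaluating $x^{e}-1$ at $0$, which keeps the proof short.
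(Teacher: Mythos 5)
The paper states this result as a citation from Lidl--Niederreiter and gives no proof of its own, so there is nothing internal to compare against; your argument is correct and is essentially the standard textbook proof (take the reciprocal of the factorization $x^{e}-1=f(x)h(x)$ and apply Lemma~\ref{lem1} in both directions, after reducing to $f(0)\neq 0$). One small point: the ``main obstacle'' you flag is not actually an obstacle. Over a field (indeed over any integral domain) $\deg(pq)=\deg p+\deg q$ unconditionally, so $(pq)^{*}=p^{*}q^{*}$ holds with no hypothesis on constant terms; the only identity that genuinely needs $f(0)\neq 0$ is $(f^{*})^{*}=f$, which you have already secured by the reduction. Also worth making explicit before invoking Lemma~\ref{lem1} for $f^{*}$: one has $f^{*}(0)=a_{n}\neq 0$ since $a_{n}$ is the leading coefficient of $f$.
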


\begin{theorem}\cite{lidl}\label{thm3}
 Let $g_1, g_2, \cdots, g_k$ be pairwise prime nonzero polynomials over $\mathbb{F}_q$ and let $f=g_1 g_2\cdots g_k.$ Then $ord(f)=lcm(ord(g_1), ord(g_2), \cdots, ord(g_k)).$
\end{theorem}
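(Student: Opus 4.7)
The plan is to prove the equality by showing that each side divides the other, and then invoking antisymmetry of the divisibility relation on positive integers. The main tool will be Lemma \ref{lem1}, which converts divisibility of polynomials of the form $x^e-1$ into divisibility of their orders.

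First, I would do a small reduction to the case $f(0)\neq 0$. Since the $g_i$ are pairwise prime, at most one of them is divisible by $x$; say $g_1(x)=x^l \tilde g_1(x)$ with $\tilde g_1(0)\neq 0$ (possibly $l=0$). Then $f(x)=x^l \tilde g_1(x) g_2(x)\cdots g_k(x)$ and, by definition, $ord(f)=ord(\tilde g_1 g_2\cdots g_k)$ while $ord(g_1)=ord(\tilde g_1)$. So I may replace $g_1$ by $\tilde g_1$ and assume from the outset that every $g_i(0)\neq 0$, in which case $f(0)\neq 0$ as well.

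Now set $e:=ord(f)$ and $c:=\mathrm{lcm}(ord(g_1),\ldots,ord(g_k))$. For the direction $c\mid e$: by definition $f(x)\mid x^e-1$, and since each $g_i$ divides $f$, each $g_i$ divides $x^e-1$; Lemma \ref{lem1} then gives $ord(g_i)\mid e$ for every $i$, so $c\mid e$. For the direction $e\mid c$: for each $i$ we have $g_i(x)\mid x^{ord(g_i)}-1$, and since $ord(g_i)\mid c$, Corollary \ref{cor1} (or Lemma \ref{lem1} applied in reverse) yields $x^{ord(g_i)}-1\mid x^c-1$, hence $g_i(x)\mid x^c-1$ for each $i$. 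Because the $g_i$ are pairwise coprime in the UFD $\mathbb{F}_q[x]$, their product $f$ also divides $x^c-1$, and Lemma \ref{lem1} gives $e=ord(f)\mid c$. Combining the two divisibilities yields $ord(f)=c$.

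The expected obstacle is not in the main chain of divisibilities, which is routine once the lemmas are in place, but in the bookkeeping around the case where $x\mid g_i$ for some $i$; handling this correctly is what makes the pairwise-prime hypothesis genuinely essential, both to ensure that the factor $x^l$ can be isolated in a single $g_i$ and to pass from ``each $g_i$ divides $x^c-1$'' to ``the product divides $x^c-1$.''
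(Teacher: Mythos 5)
Your proof is correct. The paper does not actually prove this statement---it is quoted from Lidl--Niederreiter as a preliminary---but your argument (reduce to $f(0)\neq 0$, then establish divisibility in both directions via Lemma \ref{lem1}, using pairwise coprimality to pass from ``each $g_i$ divides $x^c-1$'' to ``$f$ divides $x^c-1$'') is exactly the standard proof given in that reference, with all the relevant hypotheses used where they are needed.
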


In this article, we introduced the concept of the exponent of a cyclic code over $\mathbb{F}_q$ and
studied the exponent of the dual cyclic code in section 3. We introduced and determined
the exponent distribution of the cyclic code over $\mathbb{F}_q$ in section 4.

\section{Exponent of  Cyclic Codes over $\mathbb{F}_q$}
\quad In this section, we define the exponent of a cyclic code over $\mathbb{F}_q$ and discussed its properties.

 \begin{definition}
  Let $C$ be a cyclic code of length $n$ over $\mathbb{F}_q$ with generator polynomial $g(x).$ Then the least positive integer $e$ for
  which $g(x)$ divides $x^e-1$ is called the {\it exponent of the cyclic code $C$} and denoted by $e=exp(C).$
 \end{definition}
 Clearly, the exponent of $C$ is the same as the order of its generator polynomial.
 
\begin{example}\label{ex1}
 Let $C=\{000, 101, 110, 011\}$ be a cyclic code of length 3 over $\mathbb{F}_2.$ Then the generator polynomial of $C$ is $1+x$ and $1+x\mid x^1-1.$ 
 Hence $exp(C)=1.$
\end{example}

\begin{theorem}\label{thm4}
 Let $C$ be an $[n, k]_q$ cyclic code over $\mathbb{F}_q$ and let $e=exp(C).$ Then $e\mid n.$ 
\end{theorem}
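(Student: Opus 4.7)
The plan is to reduce the claim immediately to a statement about the order of the generator polynomial and then invoke Lemma \ref{lem1}. By the definition just given, $\exp(C) = \mathrm{ord}(g(x))$, where $g(x)$ is the generator polynomial of $C$. So the task is really to show that $\mathrm{ord}(g(x)) \mid n$.

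First I would note, using Theorem \ref{thm}(3), that since $C$ is an ideal of $R_n = \mathbb{F}_q[x]/\langle x^n - 1\rangle$, its unique monic generator $g(x)$ divides $x^n - 1$ in $\mathbb{F}_q[x]$. In particular $g(0) \neq 0$ (otherwise $x \mid g(x)$, contradicting $g(x) \mid x^n - 1$ since $x \nmid x^n - 1$), so the order of $g$ is defined in the ordinary sense and we are in the hypothesis of Lemma \ref{lem1}.

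Next I would apply Lemma \ref{lem1} with $r = n$: since $g(x)$ divides $x^n - 1$ and $g(0) \neq 0$, the lemma gives $\mathrm{ord}(g) \mid n$. Combining this with $e = \exp(C) = \mathrm{ord}(g(x))$ yields $e \mid n$, which is the desired conclusion.

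There is no real obstacle here; the only mild subtlety is the verification that $g(0) \neq 0$ so that Lemma \ref{lem1} applies as stated. That is handled in one line as above. The whole argument is essentially a translation between the code-theoretic language (exponent of $C$) and the polynomial-theoretic language (order of $g$), followed by a direct application of Lemma \ref{lem1}.
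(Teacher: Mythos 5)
Your proof is correct and follows essentially the same route as the paper: apply Lemma \ref{lem1} with $r=n$ to the generator polynomial $g(x)\mid x^n-1$ and use $\exp(C)=\mathrm{ord}(g)$. Your extra check that $g(0)\neq 0$ (so that Lemma \ref{lem1} applies as stated) is a small point of care that the paper omits, but otherwise the arguments coincide.
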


\begin{proof}
 Let $C$ be an $[n, k]_q$ cyclic code over $\mathbb{F}_q$ with exponent $e$ and let $g(x)$ be its generator polynomial. Then $g(x)\mid x^n-1$ and hence
 by Lemma \ref{lem1}, $ord(g)\mid n.$ Since $e = exp(C)$ and $ord(g) = exp(C),$ hence $e \mid n.$
\end{proof}
Note that by the above theorem, the exponent of a cyclic code is always less than or
equal to its length. The bound is reached by the following example.
\begin{example}
 Let $C=\{000, 111, 222\}$ be a cyclic code of length 3 over $\mathbb{F}_3.$ Then the generator polynomial $ 1+x +x^2$ of $C$ divides $x^3-1$ and hence $exp(C)=3.$
\end{example}

\begin{theorem}\label{thm5}
 Let $C$ be an $[n, k]_q$ cyclic code over $\mathbb{F}_q.$ If $e=exp(C),$ then $n-k\leq e.$ 
\end{theorem}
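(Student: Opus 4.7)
The plan is to compare the degree of the generator polynomial $g(x)$ of $C$ with the exponent $e$ directly. The key fact I would use is that, for an $[n,k]_q$ cyclic code, the degree of the generator polynomial equals $n-k$. This is standard: since $g(x) \mid x^n - 1$, a basis of $C$ viewed as a subspace of $R_n$ is given by $g(x), xg(x), \dots, x^{n-1-\deg g}g(x)$, so $\dim C = n - \deg g$, forcing $\deg g = n - k$.

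Once that identification is in hand, the inequality is almost immediate. By the definition of the exponent, $g(x) \mid x^e - 1$ in $\mathbb{F}_q[x]$. Since $x^e - 1$ is a nonzero polynomial of degree $e$ and $g(x)$ is a nonzero divisor of it, divisibility in a polynomial ring over a field forces $\deg g(x) \le \deg(x^e - 1) = e$. Combining this with $\deg g = n - k$ yields $n - k \le e$, as required.

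I do not anticipate a real obstacle here; the whole argument rests on the two standard facts $\deg g = n-k$ and ``a divisor has degree no larger than the polynomial it divides.'' If anything needs a word of care, it is only to note that the case $e = 0$ does not occur, since $e$ is the \emph{least positive} integer with $g \mid x^e - 1$, so $e \ge 1$ and the degree comparison with $x^e - 1$ is meaningful. The proof should therefore be only a few lines, essentially invoking Theorem~\ref{thm} (to get the monic generator of minimal degree) together with the definition of $\exp(C)$.
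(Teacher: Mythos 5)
Your proposal is correct and follows essentially the same route as the paper: identify $\deg g(x) = n-k$ and use $g(x) \mid x^e - 1$ to conclude $\deg g(x) \le e$. The extra justifications you supply (why $\deg g = n-k$ and why $e \ge 1$) are fine but not needed beyond what the paper already assumes.
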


\begin{proof}
 Let $C$ be an $[n, k]_q$ cyclic code over  $\mathbb{F}_q$ and let $g(x)$ be its generator polynomial. Then $deg(g(x))=n-k.$
 If $e=exp(C),$ then $g(x)| x^e-1.$ This implies that, $deg(g(x))\leq deg(x^e-1).$ That is, $n-k\leq e.$ 
\end{proof}
By Theorems \ref{thm4} and \ref{thm5}, we have

\begin{corollary}\label{cor2}
 Let $C$ be an $[n, k]_q$ cyclic code over $\mathbb{F}_q.$ If $e=exp(C),$ then $n-k\leq e \leq n.$
\end{corollary}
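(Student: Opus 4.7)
The plan is to assemble this corollary directly from the two theorems that immediately precede it, so the argument will be essentially a one-line synthesis rather than a genuinely new proof. I would first invoke Theorem \ref{thm5} to conclude $n-k \leq e$, which handles the lower bound. For the upper bound, I would invoke Theorem \ref{thm4} to obtain $e \mid n$; since $e$ is defined as the \emph{least positive integer} for which $g(x) \mid x^e - 1$, it is in particular a positive integer, and any positive divisor of $n$ is at most $n$. Chaining the two inequalities gives $n-k \leq e \leq n$.

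The only subtlety worth flagging is the step from $e \mid n$ to $e \leq n$, which tacitly uses that $n$ itself is a positive integer (guaranteed because $C$ has length $n \geq 1$) and that $e \geq 1$ (guaranteed because $g(x) \mid x^n - 1$ shows such an $e$ exists and the definition picks the least positive one). I would mention this explicitly so the reader sees why $e \mid n$ is being used as $e \leq n$ rather than, say, conflating divisibility with order.

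There is no real obstacle to overcome here; the work has already been done in Theorems \ref{thm4} and \ref{thm5}, and the corollary is simply the conjunction of their conclusions. Accordingly, my written proof would be at most two sentences: one invoking each theorem, and then a concluding sentence stating the combined bound.
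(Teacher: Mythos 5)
Your proposal is correct and is exactly the paper's approach: the corollary is presented there as an immediate consequence of Theorems \ref{thm4} and \ref{thm5}, with the lower bound from Theorem \ref{thm5} and the upper bound from $e \mid n$ in Theorem \ref{thm4}. Your added remark that a positive divisor of $n$ is at most $n$ is a harmless (and slightly more careful) elaboration of the same argument.
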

The lower bound of the above corollary is reched by the Example \ref{ex1}
 
\begin{theorem}\label{thm6}
 Let $C$ be an $[n, k]_q$ cyclic code over $\mathbb{F}_q$ and let $g(x), h(x)$ be the generator and parity-check polynomials of $C,$ respectively. If $(g(x), h(x))=1,$ 
 $exp(C)=e$ and $r=exp(C^{\perp}),$ then $n=lcm(e, r).$ 
\end{theorem}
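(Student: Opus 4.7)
The plan is to reduce everything to a single application of Theorem~\ref{thm3} (order of a product of coprime factors) applied to the factorization $x^{n}-1=g(x)h(x)$.

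First I would set up the ingredients. Since $g(x)h(x)=x^{n}-1$ and $x^{n}-1$ has a nonzero constant term, both $g(0)\neq 0$ and $h(0)\neq 0$, so $\operatorname{ord}(g)$ and $\operatorname{ord}(h)$ are defined in the usual sense. By definition of the exponent, $\operatorname{ord}(g)=\exp(C)=e$. For $r=\exp(C^{\perp})$, Theorem~\ref{thm1} says $C^{\perp}$ has generator polynomial $h^{*}(x)$, so $r=\operatorname{ord}(h^{*})$, and Theorem~\ref{thm2} gives $\operatorname{ord}(h^{*})=\operatorname{ord}(h)$. Hence $r=\operatorname{ord}(h)$.

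Next I would compute $\operatorname{ord}(x^{n}-1)$. Clearly $x^{n}-1$ divides itself, so $\operatorname{ord}(x^{n}-1)\le n$; on the other hand, if $x^{n}-1$ divides $x^{m}-1$ for some positive $m$, then a degree comparison (or Corollary~\ref{cor1}, giving $\gcd(x^{n}-1,x^{m}-1)=x^{\gcd(n,m)}-1=x^{n}-1$, hence $n\mid m$) forces $m\ge n$. Therefore $\operatorname{ord}(x^{n}-1)=n$.

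Finally, since the hypothesis $(g(x),h(x))=1$ makes $g$ and $h$ pairwise coprime nonzero polynomials, Theorem~\ref{thm3} applied to the factorization $x^{n}-1=g(x)h(x)$ yields
\begin{equation*}
n=\operatorname{ord}(x^{n}-1)=\operatorname{ord}(gh)=\operatorname{lcm}\bigl(\operatorname{ord}(g),\operatorname{ord}(h)\bigr)=\operatorname{lcm}(e,r),
\end{equation*}
which is the desired identity. There is really no obstacle here: the only point that needs a moment of care is recognizing that one must pass through $h^{*}$ to get from $\exp(C^{\perp})$ back to $\operatorname{ord}(h)$, and this is handled cleanly by Theorems~\ref{thm1} and~\ref{thm2}. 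The coprimality hypothesis is used exactly once, to license Theorem~\ref{thm3}.
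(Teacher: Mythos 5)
Your proof is correct and follows essentially the same route as the paper: apply Theorem~\ref{thm3} to the coprime factorization $x^{n}-1=g(x)h(x)$. You simply supply the details the paper leaves implicit, namely that $\operatorname{ord}(x^{n}-1)=n$ and that $r=\operatorname{ord}(h^{*})=\operatorname{ord}(h)$ via Theorems~\ref{thm1} and~\ref{thm2}.
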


\begin{proof}
 Let $g(x)$ be a generator polynomial and let $h(x)$ be a parity check polynomial of $C.$ Since $x^n-1=g(x)h(x)$ and $(g(x), h(x))=1,$ by Theorem \ref{thm3}, we have $n=lcm(e, r).$ 
\end{proof}
\begin{theorem}
  Let $C_i$ be an $[n, k_i]_q$ cyclic code over $\mathbb{F}_q$ for $i=1, 2.$ If $e_i$ is the exponent of $C_i,$ then $exp(C_1 + C_2)=gcd(e_1, e_2).$
 \end{theorem}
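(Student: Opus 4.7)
The plan is to reduce the theorem to a statement about polynomial orders. Let $g_1,g_2$ be the generator polynomials of $C_1,C_2$. Viewed inside $R_n$, the sum $C_1+C_2$ is the ideal generated by $\{g_1,g_2\}$; since $R_n$ is a principal ideal ring (Theorem~\ref{thm}), this ideal is generated by $gcd(g_1,g_2)$, which is therefore the generator polynomial of the cyclic code $C_1+C_2$. Hence $exp(C_1+C_2)=ord(gcd(g_1,g_2))$, and the claim becomes the polynomial identity
\[
ord(gcd(g_1,g_2))=gcd(ord(g_1),ord(g_2)).
\]

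For the forward divisibility, set $d=gcd(e_1,e_2)$. Since $gcd(g_1,g_2)$ divides both $g_i$, and each $g_i$ divides $x^{e_i}-1$ by the definition of $e_i$, I conclude that $gcd(g_1,g_2)$ divides $gcd(x^{e_1}-1,x^{e_2}-1)$, which by Corollary~\ref{cor1} equals $x^d-1$. Lemma~\ref{lem1} then yields $ord(gcd(g_1,g_2)) \mid d$.

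For the reverse divisibility I would factor each $g_i$ into pairwise coprime irreducible factors over $\mathbb{F}_q$ (valid when $gcd(n,q)=1$, the standard setting for cyclic codes) and invoke Theorem~\ref{thm3} to rewrite $e_i$ as the $lcm$ of the orders of the irreducible factors dividing $g_i$. Then $gcd(g_1,g_2)$ is the product of those irreducibles common to both $g_1$ and $g_2$, and its order is the $lcm$ of the orders of these shared factors. I would then attempt a prime-by-prime analysis: for each prime power $p^a$ dividing $d$, locate irreducible factors in $g_1$ and in $g_2$ realizing that $p^a$ in their orders, and argue that such factors must coincide and so appear in $gcd(g_1,g_2)$.

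The hard part will be this reverse direction. There is no formal identity stating that the $gcd$ of two $lcm$s equals the $lcm$ of common entries, so the matching must be forced by extra arithmetic structure of the irreducible divisors of $x^n-1$—most naturally via the $q$-cyclotomic cosets modulo $n$, which constrain how the orders of distinct irreducible factors interlock. Before committing to a full proof I would test the statement against small examples (for instance $n=15,\,q=2$, where the irreducible decomposition of $x^{15}-1$ is rich) to see whether the prime-by-prime matching really goes through without auxiliary hypotheses, or whether a coprimality-type assumption in the spirit of Theorem~\ref{thm6} is implicitly needed.
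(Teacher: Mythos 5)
Your forward direction is sound and is essentially the first half of the paper's own argument: $\gcd(g_1,g_2)$ generates $C_1+C_2$, it divides $\gcd(x^{e_1}-1,x^{e_2}-1)=x^d-1$ by Corollary~\ref{cor1}, and Lemma~\ref{lem1} gives $\exp(C_1+C_2)\mid d$. But you stop short of the reverse divisibility, and your instinct to distrust it is correct: the reverse direction is not merely hard, it is false, so the gap in your proposal cannot be closed. Take $q=2$, $n=7$, $g_1=x^3+x+1$ and $g_2=x^3+x^2+1$. Both are primitive, so $e_1=e_2=7$ and $\gcd(e_1,e_2)=7$; but $\gcd(g_1,g_2)=1$, so $C_1+C_2=R_7$ has generator polynomial $1$ and exponent $1\neq 7$. (Your suggested test case $n=15$, $q=2$ with the two reciprocal primitive quartics gives the same failure.) There is no ``lcm of common entries'' identity to be had: two coprime generators can each carry large order while sharing no irreducible factor at all.

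It is worth seeing exactly where the paper's own proof breaks, since your proposal implicitly trusts the statement. After establishing $e\mid d$, the paper assumes $e<d$, writes $d=eq+r$ with $0\le r<e$, deduces $g\mid x^r-1$, and invokes minimality of $e$ to get $e\le r$, a contradiction. But since $e\mid d$ has already been proved, the remainder $r$ is always $0$, $x^r-1=0$, and the minimality argument does not apply; the intended contradiction never arises. So only the inequality $\exp(C_1+C_2)\mid\gcd(e_1,e_2)$ survives, which is precisely the part you proved. If you wanted an honest positive statement in the other direction, you would need a hypothesis forcing the shared irreducible factors of $g_1$ and $g_2$ to realize every prime power in $\gcd(e_1,e_2)$ --- for instance $g_1\mid g_2$ or $g_2\mid g_1$ --- rather than the cyclotomic-coset bookkeeping you sketch, which cannot rescue the general claim.
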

\begin{proof}
 Let $C_1, C_2$ be two cyclic codes of length $n$ over $\mathbb{F}_q.$ Then
 $$C_1+ C_2=\{c_1+ c_2\mid c_i\in C_i, i=1, 2\}$$
 is a cyclic code of length $n$ over  $\mathbb{F}_q.$
 Let $g_1(x), g_2(x), g(x)$ be 
 the generator polynomial of $C_1, C_2$ and $C_1 + C_2,$ respectively and let $exp(C_1+ C_2)=e.$ Then $g(x)\mid x^e-1.$ 
 Since $C_i=\langle g_i(x)\rangle,$  $C_1+ C_2=\langle gcd(g_1(x), g_2(x)\rangle.$ This implies that, $g(x)\mid g_i(x)$ for $i=1,2.$
 Then $e$ divides $e_i$ for $i=1, 2.$ Hence $e$ divides $gcd(e_1, e_2)=d.$ That is, $e \leq d.$
 
 Suppose that $e < d.$ Then by Division Algorithm, there exist $q, r \in \mathbb{Z}$ such that $d=eq+r$ where $0\leq r<e.$ Consider
 \begin{align*}
  x^d-1&=x^{eq+r}-1\\
  &=x^{eq}x^r-x^r +x^r-1\\
  &=x^r(x^{eq}-1)+x^r-1\\
  (x^d-1)-x^r(x^{eq}-1)&=x^r-1.
 \end{align*}
Since $g(x)$ divides both $x^e-1$ and $x^d-1,$  $g(x)$ divides $x^r-1.$ By Lemma \ref{thm1}, we have $e\leq r,$ a contradiction to $r < e.$  Therefore, $e < d$ is impossible and hence $e=d=gcd(e_1, e_2).$
 \end{proof}

\section{Exponent Distribution of a Cyclic Codes over $\mathbb{F}_q$}
\quad In this section, we introduce the exponent distribution of a cyclic code over $\mathbb{F}_q$ and discuss its properties.

We know that, if $c=(c_0, c_1, \cdots, c_{n-1})$ is a codeword in the cyclic code $C,$ then  its corresponding polynomial representation is $c(x)=c_0+c_1 x+\cdots+c_{n-1}x^{n-1}.$ Since $c(x) \in C = \langle
 g(x) \rangle,$ there exists a polynomial $a(x)\in R_n$ such that $c(x)=a(x)g(x).$ 
We call $a(x)$ is the information polynomial of $c.$

For $1\leq t\leq n,$ we define  $A_t:= \{c(x) \in C \mid ord(c(x))=t\}.$ Let $B_t=\mid A_t\mid,$ then the sequence $\{B_t\}$ is called the {\it exponent distribution} of
the cyclic code $C$ over $\mathbb{F}_q.$

\begin{example}
 Let $C=\{000, 101, 110, 011\}$ be a cyclic code over $\mathbb{F}_2.$ Then the corresponding codeword
  polynomials are $0, 1+x^2, 1+x, x+x^2$ and $ord(1+x^2)=2, ord(x+x^2)=1, ord(1+x)=1.$ This implies that, $A_1=\{1+x, x+x^2\}, A_2=\{1+x^2\}$ and $ A_3=\emptyset.$
  Hence $(2, 1, 0)$ is the exponent distribution of the cyclic code $C.$
\end{example}

\begin{example}
Let $C=\{000, 111, 222\}$ be a cyclic code over $\mathbb{F}_3.$ Then the corresponding codeword
  polynomials are $0, 1+x+x^2, 2+2x+2x^2$ and $ord(1+x+x^2)=3, ord(2+2x+2x^2)=3.$ This implies that, $A_1=\emptyset, A_2=\emptyset, A_3=\{1+x+x^2, 2+2x+2x^2\}.$
    Hence $(0, 0, 2)$ is the exponent distribution of the cyclic code $C.$
\end{example}

\begin{example}
 Let $C=\{000, 100, 010, 001, 110, 011, 101, 111\}$ be a cyclic code over $\mathbb{F}_2.$ Then the corresponding codeword
  polynomials are $0, 1, x, x^2, 1+x, 1+x^2, x+x^2, 1+x+x^2.$ This implies that, $A_1=\{1, x, x^2, 1+x\}, A_2=\{1+x^2\}, A_3=\{1+x+x^2\}.$
    Hence $(4, 1, 1)$ is the exponent distribution of the cyclic code $C.$
\end{example}

\begin{theorem}
 Let $C$ be an $[n, k]_q$ cyclic code over $\mathbb{F}_q$ with generator polynomial $g(x).$ If $G$ is a generator matrix of $C,$ then the order of the polynomials 
 corresponding to a basis of $C$ are same. 
\end{theorem}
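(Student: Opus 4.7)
The plan is to interpret the statement as referring to the canonical generator matrix of the cyclic code, whose $k$ rows are the coefficient vectors of the shifts $g(x),\, xg(x),\, x^2 g(x),\, \ldots,\, x^{k-1} g(x)$. Under this reading, I would establish that all basis polynomials share the common order $ord(g) = exp(C)$.

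First, I would observe that $g(0) \neq 0$: because $g(x)$ divides $x^n - 1$ in $\mathbb{F}_q[x]$, the constant term of $x^n - 1$ equals $-1$, and the constant term of any divisor of a polynomial with nonzero constant term must itself be nonzero. This is the step that makes the definition of order behave cleanly on the shifted basis.

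Next, for each $i$ with $0 \leq i \leq k-1$, the $i$-th row of $G$ corresponds to the polynomial $f_i(x) = x^i g(x)$. This polynomial is already displayed in the canonical form $x^l h(x)$ with $l = i$ and $h(x) = g(x)$ satisfying $h(0) \neq 0$. By the definition of order for polynomials vanishing at $0$ (stated in the paper just before Lemma \ref{lem1}), $ord(f_i) = ord(g)$. Hence every row of $G$ has the same order, namely $ord(g) = exp(C)$, proving the claim.

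The principal subtlety --- what I regard as the main obstacle --- is that the conclusion cannot hold for an \emph{arbitrary} basis of $C$: taking linear combinations of the canonical rows can strictly increase the order. For instance, over $\mathbb{F}_2$ the length-$3$ cyclic code with $g(x) = 1 + x$ admits the alternative basis $\{1+x,\, 1+x^2\}$, and $ord(1+x) = 1$ while $ord(1+x^2) = ord((1+x)^2) = 2$. Consequently, the proof really hinges on reading the theorem as referring to the canonical shifted basis (the one displayed in every standard presentation of the generator matrix of a cyclic code), after which equality of orders is immediate from the definition.
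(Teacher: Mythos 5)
Your proof is correct and follows essentially the same route as the paper: both arguments take the canonical shifted basis $\{g(x), xg(x), \ldots, x^{k-1}g(x)\}$ and invoke the convention that $ord(x^i g(x)) = ord(g(x))$. Your additional observations --- that $g(0)\neq 0$ since $g(x)\mid x^n-1$, and that the conclusion genuinely fails for an arbitrary basis (e.g.\ $\{1+x,\,1+x^2\}$ over $\mathbb{F}_2$) --- are sound and in fact supply justification and a caveat that the paper's own proof omits.
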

\begin{proof}
Let $g(x)$ be a generator polynomial of $C.$ Then $\{g(x), xg(x), x^2g(x), \cdots, x^{k-1}g(x)\}$ is a basis of $C.$ Since $ord(g(x)) = ord(x^ig(x))$  for all $i \geq 0$ and hence  the order of the polynomials 
corresponding to a basis of $C$ are same.
\end{proof}
 \begin{corollary}
 Let $C$ be an $[n, k]_q$ cyclic code over $\mathbb{F}_q$ with generator polynomial $g(x).$ If $ord(g(x))= e,$ then $B_e \geq k.$
 \end{corollary}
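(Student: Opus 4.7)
The plan is to deduce this directly from the preceding theorem, so the proof is essentially a counting argument once the right set of codewords is exhibited. By the previous theorem, the polynomials $g(x), xg(x), x^2 g(x), \ldots, x^{k-1} g(x)$ form a basis of $C$ (each has degree at most $(n-k)+(k-1) = n-1$, so they live naturally in $R_n$), and they all have the same order, which equals $\text{ord}(g(x)) = e$.

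Next I would observe that these $k$ polynomials are pairwise distinct, which is immediate from their being linearly independent as a basis. Hence all $k$ of them belong to the set $A_e = \{c(x) \in C : \text{ord}(c(x)) = e\}$. This gives $|A_e| \geq k$, i.e., $B_e \geq k$.

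The only place where one must be slightly careful is the claim that $\text{ord}(x^i g(x)) = \text{ord}(g(x))$; this is precisely the content of the definition of order for polynomials with zero constant term (one factors out the maximal power of $x$ and takes the order of the remaining factor), and it is exactly what was invoked in the preceding theorem. So there is no real obstacle: the corollary follows in two lines by quoting the basis from the previous theorem, noting the $k$ basis elements are distinct, and observing that each lies in $A_e$.
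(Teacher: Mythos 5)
Your proof is correct and follows essentially the same route as the paper: both deduce the bound by placing the $k$ basis polynomials $g(x), xg(x), \ldots, x^{k-1}g(x)$ from the preceding theorem into $A_e$. Your explicit remarks on distinctness of the basis elements and on their degrees staying below $n$ are small details the paper leaves implicit, but the argument is the same.
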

\begin{proof} Given that  $ord(g(x))= e.$ By the above theorem, there are $k$ elements in $A_e$ and hence $B_e \geq k.$
\end{proof}


\section{Conclusion}
\quad In this paper, we have introduced and studied the concept of the exponent of a cyclic code over a finite field $\mathbb{F}_q.$ We have given a relation between the exponent of a cyclic code and its dual code. Finally, we introduced determined the exponent distribution of the cyclic code. A future work is to find bounds for the number of cyclic codes of given length and exponent over $\mathbb{F}_q.$ Finding the exponents of the BCH, Reed-Soloman and Goppa codes are other
direction to work futher.


\begin{thebibliography}{99}
 \bibitem{lidl} Rudolf Lidl and Harald Niederreiter, {\it Introduction to finite fields and their applications}, Cambridge University press, Cambridge, 1986.
 \bibitem{vera} Vera Pless, {\it Introduction to the Theory of Error‐Correcting Codes},  John Wiley \& Sons, Inc, 1998.
 \bibitem{dummit} David S. Dummit and  Richard M. Foote, {\it Abstract Algebra}, Wiley; 3rd Edition (July 14, 2003).
 \bibitem{tod} Todd K. Moon, {\it Error Correction Coding: Mathematical Methods and Algorithms}, John Wiley \& Sons, Inc., 2005.
\end{thebibliography}
\end{document}